\documentclass{article}
\usepackage{graphicx} 
\usepackage{amsmath}
\usepackage{amsthm}
\usepackage{amssymb}

\newtheorem{proposition}{Proposition}[section]
\newtheorem{corollary}{Corollary}[proposition]
\newtheorem{lemma}{Lemma}[section]
\usepackage{nicematrix}
\usepackage[style=vancouver,sorting=none]{biblatex}
\usepackage{amsmath}
\usepackage{authblk}
\usepackage{hyperref}

\AtEveryBibitem{\clearfield{doi}\clearfield{month}\clearfield{day}}
\appto{\bibsetup}{\emergencystretch=4em\sloppy}

\newcommand{\fzero}{\mathbf{F}_{N,0,0,k}}
\newcommand{\fm}{\mathbf{F}_{N,m,k_1,k_2}}
\newcommand{\fn}{\mathbf{F}_{N,n,k_1,k_2}}

\title{Demographic inference of pathogen-infected populations from partially observed transmission forests}
\author[1]{Matthew Hall}
\affil[1]{Department of Clinical Research, London School of Hygiene and Tropical Medicine, London, UK}
\date{}

\begin{document}

\maketitle

\begin{abstract}
Genomic epidemiology has several methods for identifying which sampled individuals are linked by
close proximity in the transmission chain, whether by grouping them into clusters under a genetic
distance threshold or by identifying probable direct transmission pairs. Individuals found to have
no sampled neighbours---singletons---are usually set aside. We argue that they are informative.
Whether any two sampled individuals prove to be linked depends on the size of the sampling frame
and the number of independent lineage introductions into it, and thus the balance of linked and
unlinked individuals is itself data about these quantities.

We formalise this by treating the intersection of a transmission tree with a sampling frame as a
labelled rooted forest, from which a fixed number of nodes are sampled uniformly at random. Using
the all-minors matrix-tree theorem, we derive a closed-form expression for the number of labelled
rooted $k$-forests on $N$ nodes in which a specified set of nodes is independent. From this we
obtain, again in closed form, the probability that a sample of a given size contains no linked
individuals at all, and the likelihood of an arbitrary observed configuration of clusters, both when the
transmission structure within each is known, and when only the cluster sizes are. 

The approach is closer to a survey, or to mark-recapture, than to conventional phylodynamic model
fitting: the information comes from the linkage structure of a single sample alone with no
assumptions regarding pathogen dynamics. We outline three applications: power calculations for
prospective studies, a test of the uniform sampling assumption that underlies the reading of
clusters as transmission hotspots, and demographic inference itself. We finally set out the assumptions
that a more flexible implementation would need to relax.

\end{abstract}

\section{Introduction}

The fundamental principle of genomic epidemiology is that there is a correlation between similarity between pathogen genomes and the proximity of the hosts that they were derived from in the chain of transmission. It is this principle that allows a phylogeny to be used as a proxy for a transmission tree in the field of phylodynamics \cite{Grenfell2004-ph,Volz2013-vp}, and it is this principle that allows related cases to be grouped into ``clusters'' according to a numerical threshold for genomic similarity \cite{RagonnetCronin2013-cp,KosakovskyPond2018-ht}. More recent ``source attribution'' methodologies also allow for the identification of direct transmission pairs from genomic data \cite{Hall2015-bl, Colijn2024-br, Campbell2018-o2, Ypma2013-tt,Didelot2017-tp}, sometimes by leveraging within-host diversity \cite{Wymant2018-ps,DeMaio2018-bt, Skums2018-qn}.

Generally, clustering studies have been concerned solely with the investigation of individuals who are actually members of clusters \cite{Wertheim2014-gn,Oster2018-mc}, and source attribution studies the investigation of individuals who are linked to another individual by transmission \cite{Ratmann2019-ds}. This, of course, requires that other cluster members, or partners in transmission, have actually been sampled according to the protocol of the study. Both methodologies will usually also find ``singleton'' individuals whose neighbours have not been sampled, but these have rarely been the subject of investigation. Here, we propose that this is an omission and provide a basis for a methodology that would integrate the singletons in analysis. 

Suppose a ``study'' is to be conducted that involves the identification of individuals in a population who are, or were at some point, infected with a pathogen. Let the ``sampling frame'' represent the complete set of individuals who ever could be included in the study. We assume that if an individual is sampled, they were infected with the pathogen under investigation and, moreover, the act of sampling provides a pathogen genome. The individuals making up the sampling frame are nodes in the transmission tree that represents the history of transmissions of that pathogen, but the entire transmission tree need not be in the frame \cite{Ypma2013-tt,Didelot2017-tp} and in fact almost never will be. In the vast majority of cases, at least some individuals who have infected or been infected by members of the sampling frame will lie outside it (for example, because they live in a location where sampling was not performed, or were infected before the study was even conceived). Restricting nodes in the tree to individuals in the sampling frame breaks the tree into a forest. Each component of the forest is a transmission tree, rooted at an individual whose infector was not in the frame. See figure~\ref{fig:examplett} for an example.

There is a difference between being a member of the sampling frame, and thus eligible for sampling, and actually being sampled. Suppose that a fixed number, $n$, of the nodes in the forest are actually sampled, and they are sampled uniformly at random. There are two properties of the forest that affect how many of the $n$ are singletons. The first is the number of components in the forest. The more of these there are, the greater the chance that two sampled individuals belong to different components. The second is just the overall number of nodes. A forest being larger will decrease the chance that two randomly-selected individuals are adjacent to each other, and thus increase the chance that any given individual is a singleton.

The distribution of singletons and non-singletons is therefore pertinent to the overall size of the sampling frame, and the number of separate introductions of lineages to the population it represents. In this paper, we explore the mathematics underlying this observation, deriving, using elementary tree combinatorics, the likelihood of observing an arrangement of sampled cases for a forest of a given size. There are three key applications of the methodology:

\begin{itemize}
    \item \textbf{Power calculations:} under assumptions about the forest, we can calculate how many samples we need to take in order to obtain a dataset with a given number of linked individuals, or, to put it another way, a desired number of transmission pairs.
    \item \textbf{Hypothesis testing:} the results here are obtained under the assumption that individuals are sampled uniformly at random. This is also a fairly common assumption in clustering studies \cite{Novitsky2014-sd,Frost2015-dc,Poon2016-gc}. Sampling may be assumed not to be clustered and the presence of an individual in the sample not to make it more likely that their direct contacts will also be in the sample. This is contrary to any ``word of mouth'' effect where individuals obtaining a diagnosis may tell their personal contacts to take a test. The framework here allows us to test real datasets for departure from that assumption.
    \item\textbf{Inference:} If the sampling assumptions are reasonable, we can estimate the total size of the sampling frame, and/or the number of lineage introductions to it \cite{duPlessis2021-li}. Neither of these is necessarily straightforward to obtain using other means \cite{Hook1995-cr}. 
    \end{itemize}

\begin{figure}
    \centering
    \includegraphics[width=0.9\linewidth]{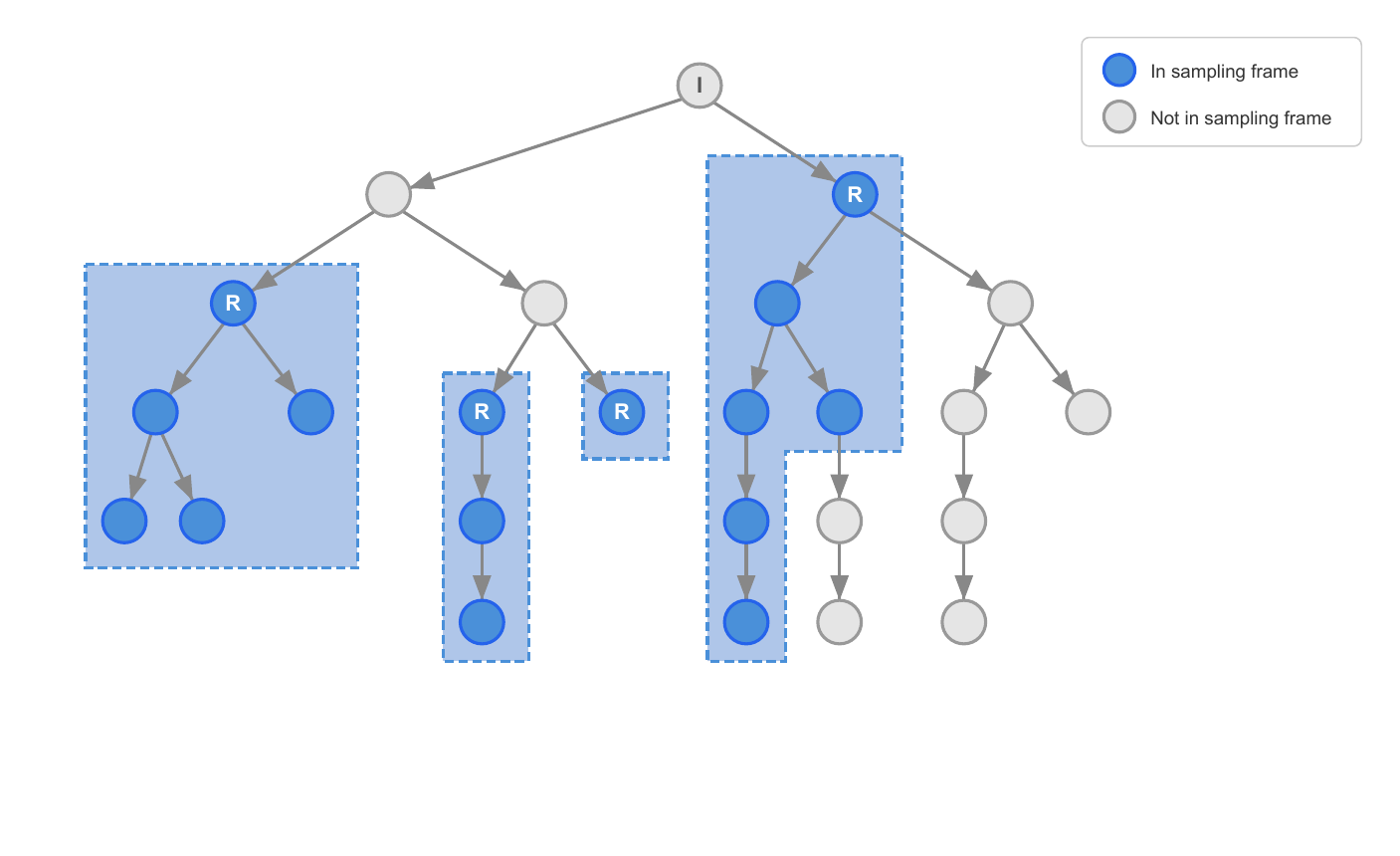}
    \caption{A full transmission tree descended from an index infection (I) but in which only 15 individuals are in the sampling frame (blue). The intersection of the sampling frame with the tree produces a forest with four components, each of which has its own root node (R). One sampled individual is a singleton.}
    \label{fig:examplett}
\end{figure}

\section{Labelled $k$-forests with specified independent nodes}\label{section1}

Let $\fzero$ be the complete set of labelled, rooted $k$-forests with $N$ nodes, where the roots are known; the labels run from $1$ to $N$. Nodes will be named only by their labels. The size of $\fzero$ is $kN^{N-k-1}$ \cite{Moon1970-li} (for $k=1$ this is Cayley's formula). If $k_1,k_2\in\mathbb{Z}^{\geq 0}$ are such that $k=k_1+k_2$, let $\fn$ be the set of labelled, rooted $k$-forests such that $n$ specific nodes are independent (not adjacent to each other), and $k_1$ of the $k$ roots are amongst these $n$ while the remaining $k_2$ are not. (Note $\fm\subset\fn$ for $m>n$, and $|\fzero|=|\mathbf{F}_{N,1,0,k}|=|\mathbf{F}_{N,1,1,k-1}|$.) 


\begin{proposition}\label{p1}
$|\fn|=(kN - k_1(n+k_2))N^{N-n-k_2-1}(N-n)^{n-k_1-1}$.
\end{proposition}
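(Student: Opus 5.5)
We view forests in $\fn$ as spanning forests of a restricted graph and count them with the all-minors matrix-tree theorem, reducing everything to one structured determinant. Let $S$ be the set of $n$ specified nodes, $T$ its complement, and $R$ the fixed set of $k$ roots, with $|R\cap S|=k_1$ and $|R\cap T|=k_2$. A rooted forest with root set $R$ is the same thing as an undirected spanning forest with $k$ components, each containing exactly one vertex of $R$; the orientation towards the root is then forced. Requiring the nodes of $S$ to be independent means exactly that no edge joins two vertices of $S$. So $|\fn|$ is the number of spanning forests of $G=K_N\setminus E(K_S)$ in which each component contains exactly one vertex of $R$.

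By the all-minors matrix-tree theorem (equivalently, contract $R$ to a single vertex and apply the ordinary matrix-tree theorem), this number is $\det L_G[\overline R]$. Here $L_G[\overline R]$ is the Laplacian of $G$ with the rows and columns indexed by $R$ deleted. In $G$:
\begin{itemize}
\item every vertex of $S$ has degree $N-n$ and is adjacent to every vertex of $T$;
\item every vertex of $T$ has degree $N-1$.
\end{itemize}
Write $a=n-k_1$ and $b=N-n-k_2$. The reduced Laplacian then has the block form
\[
M=\begin{pmatrix}(N-n)I_a & -J_{a\times b}\\ -J_{b\times a} & NI_b-J_b\end{pmatrix},
\]
where $J$ denotes all-ones matrices. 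The $T$-block is $NI_b-J_b$ because the diagonal entry is $N-1$ and all off-diagonal entries within $T$ are $-1$.

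Next we compute $\det M$ by a Schur complement on the $S$-block. We have $\det\big((N-n)I_a\big)=(N-n)^a$, and the Schur complement is
\[
NI_b-\Big(1+\tfrac{a}{N-n}\Big)J_b,
\]
using $J_{b\times a}J_{a\times b}=aJ_b$. A matrix $xI_b-yJ_b$ has eigenvalues $x$ (with multiplicity $b-1$) and $x-by$, so its determinant is $x^{b-1}(x-by)$. Hence
\[
\det M=(N-n)^{a-1}N^{b-1}\big(N(N-n)-b(N-n+a)\big).
\]
Since $N-n+a=N-k_1$, expanding gives
\[
N(N-n)-(N-n-k_2)(N-k_1)=N(k_1+k_2)-k_1(n+k_2)=kN-k_1(n+k_2).
\]
Substituting $a$ and $b$ then yields exactly the claimed formula.

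The main obstacle is the bookkeeping rather than the linear algebra. First, one must check that the count is for one fixed root set $R$; this matches the definition, since the roots are known. Second, one must check the degenerate cases, where $a=0$ (all of $S$ are roots) or $b=0$. There the Schur-complement step must be read appropriately, or verified directly, so that the formula still makes sense with a possibly negative exponent such as $(N-n)^{-1}$ cancelling against the linear factor. As a sanity check, $n=0$ (or $n=1$ with $k_1=0$) recovers $kN^{N-k-1}$, in agreement with the count of $|\fzero|$ cited earlier.
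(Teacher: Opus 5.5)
Your proof is correct and follows the paper's route: the same graph (complete graph minus the edges among the $n$ specified nodes), the all-minors matrix-tree theorem with the root rows and columns deleted, and the same block form for the reduced Laplacian. The only difference is that you evaluate the determinant by a Schur complement on the $(N-n)I_a$ block, whereas the paper uses row operations. Your route has the small advantage of never dividing by $k_2$; the paper's step of adding $\frac{1}{k_2}$ times the first row tacitly needs $k_2>0$.
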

(Notice that this is equal to $kN^{N-k-1}$ if $n=0$ and $k_2=k$, or if $n=1$ and either $k_2=k$ or $k_2=k-1$.) 

\begin{proof}
From the complete $N$-graph, remove all the edges connecting the first $n$ nodes, to make a graph $C^n$. The Laplacian matrix $L$ of $C^n$, with rows and columns ordered as our labels, consists of a top-left $n\times n$ block which has all diagonal entries equal to $N-n$ and $0$ elsewhere, a bottom-right $(N-n)\times (N-n)$ block which has all diagonal entries equal to $N-1$ and $-1$ elsewhere, and all entries outside these blocks equal to $-1$:

\begin{align*}
L = \begin{bNiceArray}{cccc|cccc}[margin]
N-n & 0 & \cdots & 0 & \Block{4-4}<\Large>{-1}&&& \\
0 & N-n & & 0 &&&&\\
\vdots & & \ddots & \vdots &&&&\\
0 & 0 & \cdots & N-n &&&& \\
\hline
\Block{4-4}<\Large>{-1}&&&& N-1 & -1 & \cdots & -1 \\ 
&&&& -1 & N-1 &  & -1 \\
&&&& \vdots & & \ddots & \vdots \\
&&&& -1 & -1 & \cdots & N-1 \\
\CodeAfter\OverBrace[shorten,yshift=3pt]{1-1}{8-4}{n}
\CodeAfter\OverBrace[shorten,yshift=3pt]{1-5}{8-8}{N-n}
\end{bNiceArray}
\end{align*}
That the set of spanning unrooted $k$-forests of $C^n$ is the set of unrooted $k$-forests where the first $n$ nodes are independent should be clear after a little thought (any edges except those connecting the first $n$ nodes are permitted in members of the latter set). If $k=1$, the matrix-tree theorem \cite{Moon1970-li} states that the number of unrooted spanning trees of $C^n$ is the determinant of the reduced Laplacian, the matrix minor when removing one row and column from $L$. 
For our purposes we want the set of spanning \emph{rooted} $k$-forests of $C^n$ where the roots are specified. For this we can use the All-Minors Matrix-Tree Theorem \cite{Chaiken1982-pq} which dictates that $|\fn|$ is the minor determinant of order $N-k$ of $C^n$'s Laplacian, where the removed rows and columns are those from $1$ to $k_1$ and from $n+1$ to $n+k_2$. Call this matrix $L^{\mathrm{red}}$:

\begin{align*}
L^\mathrm{red} = \begin{bNiceArray}{cccc|cccc}[margin]
N-n & 0 & \cdots & 0 & \Block{4-4}<\Large>{-1}&&& \\
0 & N-n & & 0 &&&&\\
\vdots & & \ddots & \vdots &&&&\\
0 & 0 & \cdots & N-n &&&& \\
\hline
\Block{4-4}<\Large>{-1}&&&& N-1 & -1 & \cdots & -1 \\ 
&&&& -1 & N-1 &  & -1 \\
&&&& \vdots & & \ddots & \vdots \\
&&&& -1 & -1 & \cdots & N-1 \\
\CodeAfter\OverBrace[shorten,yshift=3pt]{1-1}{8-4}{n-k_1}
\CodeAfter\OverBrace[shorten,yshift=3pt]{1-5}{8-8}{N-n-k_2}
\end{bNiceArray}
\end{align*}
Bearing in mind the standard rules regarding elementary row operations preserving determinants, taking the first row and adding all the other rows to it gives:
\\
\begin{align*}
\mathrm{det}(L^\mathrm{red}) = 
\begin{vNiceArray}{cccc|cccc}[margin]
k_2 & k_2 & \cdots & k_2 & k & k & \cdots &  k\\
\hline
0 & N-n & & 0 & \Block{3-4}<\Large>{-1} &&&\\
\vdots & & \ddots & \vdots &&&&\\
0 & 0 & \cdots & N-n &&&& \\
\hline
\Block{4-4}<\Large>{-1}&&&& N-1 & -1 & \cdots & -1 \\ 
&&&& -1 & N-1 &  & -1 \\
&&&& \vdots & & \ddots & \vdots \\
&&&& -1 & -1 & \cdots & N-1 \\
\CodeAfter\OverBrace[shorten,yshift=3pt]{1-1}{8-4}{n-k_1}
\CodeAfter\OverBrace[shorten,yshift=3pt]{1-5}{8-8}{N-n-k_2}
\end{vNiceArray}
\end{align*}
and then by adding the first row multiplied by $\frac{1}{k_2}$ to each of the last $N-n-k_2$ rows:
\\
\begin{align*}
\mathrm{det}(L^\mathrm{red})=\begin{vNiceArray}{cccc|cccc}[margin]
k_2 & k_2 & \cdots & k_2 & k & k & \cdots &  k\\
\hline
0 & N-n & & 0 & \Block{3-4}<\Large>{-1} &&&\\
\vdots & & \ddots & \vdots &&&&\\
0 & 0 & \cdots & N-n &&&& \\
\hline
\Block{4-4}<\Large>{0}&&&& N-1+\frac{k}{k_2} & -1+\frac{k}{k_2} & \cdots & -1+\frac{k}{k_2} \\ 
&&&& -1+\frac{k}{k_2} & N-1+\frac{k}{k_2} &  & -1+\frac{k}{k_2} \\
&&&& \vdots & & \ddots & \vdots \\
&&&& -1+\frac{k}{k_2} & -1+\frac{k}{k_2} & \cdots & N-1+\frac{k}{k_2} \\
\CodeAfter\OverBrace[shorten,yshift=3pt]{1-1}{8-4}{n-k_1}
\CodeAfter\OverBrace[shorten,yshift=3pt]{1-5}{8-8}{N-n-k_2}
\end{vNiceArray}
\end{align*}
With the determinant of the bottom right block being $\frac{1}{k_2}(kN-k_1(n+k_2))N^{N-n-k_2-1}$, this is $(kN - k_1(n+k_2))N^{N-n-k_2-1}(N-n)^{n-k_1-1}$ as needed.
\end{proof}

\section{The bridge to epidemiology}

Returning to the scenario outlined in the introduction, suppose the sampling frame has $N$ individuals infected from $k$ independent lineage introductions, and $s$ are actually sampled, we assume uniformly at random. We also assume that, if we have sampled two individuals, we can tell perfectly whether they make up a direct transmission pair or not (although not necessarily the direction of transmission). 

\begin{proposition}\label{proppp}
    If the transmission forest $\mathcal{F}$ of the sampling frame has $k$ components and $N$ nodes, then the probability of identifying no pairs in $s$ draws is
        \begin{align*}
        \frac{1}{k{N\choose k}}\sum_{k_1=0}^k {s\choose k_1}{{N-s}\choose k-k_1}(kN - k_1(s+(k-k_1)))N^{k_1-s}(N-s)^{s-k_1-1}
    \end{align*}
\end{proposition}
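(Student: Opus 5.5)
By symmetry of labels, we may take the forest $\mathcal{F}$ to be uniform over all labelled rooted $k$-forests on $N$ nodes with an unspecified root set. The $s$ sampled nodes can then be taken to be the nodes labelled $1,\dots,s$. Equivalently, we fix the forest and relabel uniformly at random. The event ``no pairs identified'' is exactly the event that these $s$ nodes are independent in $\mathcal{F}$. The probability is therefore a ratio. The numerator counts rooted $k$-forests on $[N]$, with any root set, in which nodes $1,\dots,s$ are independent. The denominator is the total number of rooted $k$-forests with unspecified roots.

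\textbf{Denominator.} For each fixed root set of size $k$ there are $kN^{N-k-1}$ forests, and there are ${N\choose k}$ root sets. The total is ${N\choose k}kN^{N-k-1}$.

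\textbf{Numerator.} Partition by the number $k_1$ of roots lying among the $s$ sampled nodes, so that $k_2=k-k_1$ roots lie among the other $N-s$ nodes. For a fixed root set with this split, Proposition~\ref{p1} with $n=s$ applies. The argument there only depends on which roots lie inside or outside the independent set, because the Laplacian of $C^n$ is invariant under permutations within each block. The count is therefore
\[
(kN-k_1(s+k-k_1))\,N^{N-s-(k-k_1)-1}(N-s)^{s-k_1-1}.
\]
The number of such root sets is ${s\choose k_1}{N-s\choose k-k_1}$. We sum over $k_1=0,\dots,k$; terms with $k_1>s$ or $k-k_1>N-s$ vanish through the binomials.

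\textbf{Ratio.} Dividing by ${N\choose k}kN^{N-k-1}$, the powers of $N$ combine as
\[
N^{N-s-k+k_1-1}/N^{N-k-1}=N^{k_1-s},
\]
which gives exactly the stated expression.

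\textbf{Main obstacle.} Two points need care. The first is justifying the uniform-forest model, or equivalently the exchangeability of the sampled labels. The statement implicitly conditions only on $N$ and $k$, so one must read it as averaging over uniformly random labelled forests. The alternative reading, uniform sampling of $s$ nodes from a random forest, is equivalent by symmetry.

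The second is the edge case $k_1=s$, where the factor $(N-s)^{-1}$ appears, and $k_2=0$. In the proof of Proposition~\ref{p1} the row operation divides by $k_2$. When $k_2=0$ I would check the formula directly or by a continuity/polynomial-identity argument in the determinant: both sides are polynomial in the entries, or one can expand along a different row. This confirms that Proposition~\ref{p1} holds for all admissible $k_1,k_2$, which the sum requires.
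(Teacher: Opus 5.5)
Your proposal is correct and follows the paper's proof essentially step for step: split by the number $k_1$ of roots among the sampled nodes, apply Proposition~\ref{p1} with $n=s$ to each of the $\binom{s}{k_1}\binom{N-s}{k-k_1}$ root sets, sum, and divide by $\binom{N}{k}kN^{N-k-1}$. The one slip is the remark that one may equivalently ``fix the forest and relabel'': for a fixed forest the probability depends on its shape (for $N=4$, $k=1$, $s=3$ it is $1/4$ for a star and $0$ for a path, against $1/16$ from the formula), so the uniform distribution over forests is a modelling assumption rather than a consequence of label symmetry, as your last paragraph in effect acknowledges; your $k_2=0$ concern about the row reduction in Proposition~\ref{p1} is legitimate, but the formula still holds in that case.
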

\begin{proof}
    We can assume without loss of generality that the nodes we have sampled have the labels 1 to $s$. If we find no pairs, then these $s$ nodes in $\mathcal{F}$ are independent.

    The total number of rooted $k$-forests with $N$ nodes and specified roots is $kN^{N-k-1}$ \cite{Moon1970-li}. If the roots are unknown, then there are ${N\choose k}$ ways of picking them, giving a total of ${N\choose k}kN^{N-k-1}$ forests.

    The total number of rooted $k$-forests with $N$ nodes and the first $s$ independent, where $k_1$ roots lie between $1$ and $s$ and $k_2$ do not, is $(kN - k_1(s+k_2))N^{N-s-k_2-1}(N-s)^{s-k_1-1}$ by proposition~\ref{p1}. For a given $k_1$ there are ${s\choose k_1}{{N-s}\choose k_2}$ ways of picking the exact roots.

    The total number of forests where the $s$ are independent is thus:
    \begin{align*}
        \sum_{k_1=0}^k {s\choose k_1}{{N-s}\choose k-k_1}(kN - k_1(s+(k-k_1)))N^{N-s-(k-k_1)-1}(N-s)^{s-k_1-1}
    \end{align*}
    giving the result when divided by ${N\choose k}kN^{N-k-1}$.
\end{proof}

We must now consider situations where not every sampled individual is unlinked to all others. If we have sampled $s$ individuals, suppose the results of the genomic analysis are that they form $n$ connected, rooted subtrees $\mathcal{C}_1,\ldots,\mathcal{C}_n$ where $\mathcal{C}_i$ has $s_i>0$ nodes. We label the individuals such that the nodes in $\mathcal{C}_1$ have labels $1$ to $s_1$, those in $\mathcal{C}_2$ have labels $s_1+1$ to $s_1+s_2$, and so on. We assume we know the $s_i$s perfectly from the data. For the $\mathcal{C}_i$s, there are two possible situations:
\begin{enumerate}
    \item We also know each $\mathcal{C}_i$; we know the exact ancestral relationship between the individuals in the sample where those exist, including which was the first infected (the root of $\mathcal{C}_i$).
    \item We do not know the $\mathcal{C}_i$s. We know that the members of a subtree form a cluster; all are related to one of the others by direct transmission, but we do not know exactly who infected whom.
\end{enumerate}
In either case, let $S=(s_1,\ldots,s_n)$, so that $\sum_{i=1}^{n}s_i=s$. Let $\mathbf{C}$ be the ordered sequence of subtrees. We would like to be able to calculate the likelihoods:
\begin{align}\label{tractable}
    L(k,N|S) = p(S|k,N)
\end{align}
or
\begin{align}\label{tractable2}
    L(k,N|S, \mathbf{C}) = p(S,\mathbf{C}|k,N)
\end{align} 
We assume we know nothing about the true forest $\mathcal{F}$ prior to collecting the data other than that it is a $k$-forest with $N$ nodes, and as such the number of possible $\mathcal{F}$s is the number of forests of this type, i.e. ${N\choose k}kN^{N-k-1}$. The data, however, constrains that set, because there is a set of subtrees that must exist. Let $\mathbf{F}_S$ be the subset of elements $\mathcal{F}\in\fzero$ such that:
\begin{itemize}
\item{For all $i$, every numbered node in $\mathcal{C}_i$ is adjacent in $\mathcal{F}$ to another node in $\mathcal{C}_i$ (if there is one).}
\item{For all $i$ and $j$ with $i\neq j$, no numbered node in $\mathcal{C}_i$ is adjacent in $\mathcal{F}$ to a node in $\mathcal{C}_j$.}
\end{itemize}
and let $\mathbf{F}_\mathbf{C}$ be such that:
\begin{itemize}
\item{$\mathcal{F}$ contains every element of $\mathbf{C}$ as a subtree, and the path joining every non-root node of $\mathcal{C}\in\mathbf{C}$ to the root of the component of $\mathcal{F}$ that $\mathcal{C}$ is a subtree of runs through the root of $\mathcal{C}$. (In other words, the rooting of each $\mathcal{C}$ respects the rooting of $\mathcal{F}$).}
\item{For all $i$ and $j$ with $i\neq j$, no numbered node in $\mathcal{C}_i$ is adjacent in $\mathcal{F}$ to a node in $\mathcal{C}_j$.}
\end{itemize}
To calculate the likelihoods~\eqref{tractable} and \eqref{tractable2}, we need to enumerate $\mathbf{F}_S$ and $\mathbf{F}_\mathbf{C}$. Let us start with $\mathbf{F}_\mathbf{C}$.

\begin{proposition}\label{propFC}
For $0\leq k_1 \leq k$, let $k_2 = k-k_1$. Then:
\begin{align*}
|\mathbf{F}_\mathbf{C}| =& \sum_{k_1=\max\bigl(0,\,k-(N-s)\bigr)}^{\min(n,k)}\binom{N-s}{k_2}(N-s)^{n-k_1-1}N^{N-s-k_2-1}\\
&\qquad\times\sum_{\substack{R\subseteq[n]\\|R|=k_1}}\Bigl(k_2N+(N-s-k_2)\sum_{j\in R}s_j\Bigr)
\end{align*}
\end{proposition}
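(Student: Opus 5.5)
The plan is to reduce $\mathbf{F}_\mathbf{C}$ to a weighted version of the situation in Proposition~\ref{p1}. I would \emph{contract} each prescribed subtree $\mathcal{C}_i$ to a single super-node $c_i$. This leaves a set of $N-s+n$ vertices: the $n$ super-nodes and the $N-s$ unsampled nodes.

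First I would show that members of $\mathbf{F}_\mathbf{C}$ correspond bijectively to rooted $k$-forests on these contracted vertices in which the super-nodes are pairwise independent, together with attachment data.
\begin{itemize}
\item No extra edge can lie inside a $\mathcal{C}_i$, since it would create a cycle.
\item No edge can join two different $\mathcal{C}_i$, by the second defining condition of $\mathbf{F}_\mathbf{C}$.
\item Because the rooting of $\mathcal{C}_i$ must respect that of $\mathcal{F}$, the unique parent edge of $\mathcal{C}_i$ (if it is not a root component) must leave from the root of $\mathcal{C}_i$. So there is only one way to realise the parent edge of $c_i$.
\item A child edge from an unsampled node $v$ into $\mathcal{C}_i$ may attach to any of the $s_i$ nodes of $\mathcal{C}_i$.
\end{itemize}
Hence $|\mathbf{F}_\mathbf{C}|$ is a weighted count of rooted spanning forests of a directed weighted graph $G$. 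In $G$, an arc from unsampled $v$ to parent $c_j$ has weight $s_j$, and every other permitted arc has weight $1$. The permitted arcs are unsampled$\to$unsampled, $c_i\to$unsampled, and unsampled$\to c_j$. Arcs between super-nodes are forbidden.

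Next I would split by which roots are super-nodes. Let $R\subseteq[n]$ with $|R|=k_1$ be the set of clusters whose roots are roots of $\mathcal{F}$, and choose the remaining $k_2=k-k_1$ roots among the $N-s$ unsampled nodes in $\binom{N-s}{k_2}$ ways. The limits $\max(0,k-(N-s))\le k_1\le\min(n,k)$ are exactly those for which these choices exist. By the symmetry among unsampled nodes, the count depends only on $R$ and $k_2$. For each such choice I would apply the directed All-Minors Matrix-Tree Theorem \cite{Chaiken1982-pq} to the weighted Laplacian of $G$, whose rows record parent choices:
\begin{itemize}
\item a non-root super-node row has diagonal $N-s$ and $-1$ in each unsampled column;
\item an unsampled row has diagonal $(N-s-1)+\sum_j s_j=N-1$, entry $-1$ in each other unsampled column, and $-s_j$ in column $c_j$.
\end{itemize}
Deleting the rows and columns of the chosen roots leaves an $(n-k_1)+(N-s-k_2)$ square matrix $M$.

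The main step, and the obstacle I expect, is evaluating $\det M$. Unlike in Proposition~\ref{p1}, $M$ is not symmetric, and the off-diagonal blocks are not uniform because of the weights $s_j$, so the single row operation used there will not directly work. I would instead take a Schur complement on the diagonal super-node block $(N-s)I_{n-k_1}$. Writing $\mathbf{1}$ for an all-ones vector of the appropriate length, the top-right block is $-\mathbf{1}\mathbf{1}^T$ and the bottom-left block is $-\mathbf{1}\sigma^T$, where $\sigma$ lists the $s_j$ for $j\notin R$. The Schur complement is then
\[
NI-J-\tfrac{\sigma\cdot\mathbf{1}}{N-s}J=NI-\Bigl(1+\tfrac{s-\sum_{j\in R}s_j}{N-s}\Bigr)J,
\]
of size $N-s-k_2$. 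Its determinant follows from the matrix determinant lemma:
\[
N^{N-s-k_2-1}\Bigl(N-(N-s-k_2)\bigl(1+\tfrac{s-\sum_{R}s_j}{N-s}\bigr)\Bigr).
\]
Multiplying by $(N-s)^{n-k_1}$ and simplifying should produce $(N-s)^{n-k_1-1}N^{N-s-k_2-1}\bigl(k_2N+(N-s-k_2)\sum_{j\in R}s_j\bigr)$. As a check, the case with all $s_i=1$ should recover Proposition~\ref{p1}. I would watch for sign and bookkeeping slips here, and also handle the degenerate case $k_1=n$ (no super-node rows) separately.

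Finally, summing over $R$ and $k_1$ gives the statement.
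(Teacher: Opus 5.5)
Your proof is correct. It shares the paper's skeleton: contract each $\mathcal{C}_i$, apply the All-Minors Matrix-Tree Theorem with the root rows and columns deleted, multiply by $\binom{N-s}{k_2}$, and sum over $R$ and $k_1$. Where you differ is in how the rooting constraint enters.

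The paper works with the undirected multigraph $D^{n,S}$, in which each edge between collapsed node $i$ and an unsampled node has multiplicity $s_i$. Its symmetric Laplacian gives a minor carrying an extra factor $\prod_{i\notin R}s_i$. The paper then divides this out, arguing that the parent edge of each non-root cluster is forced to attach at the root of $\mathcal{C}_i$.

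You instead build that asymmetry into a directed weighted graph: the parent arc out of $c_i$ has weight $1$, and arcs into $c_j$ have weight $s_j$. With the directed form of Chaiken's theorem your minor is $C_R$ directly. Your super-node rows are exactly the paper's rows divided by $s_i$, while the unsampled rows coincide, which is why the two determinants differ by precisely $\prod_{i\notin R}s_i$.

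Your determinant evaluation, a Schur complement on $(N-s)I_{n-k_1}$ followed by the matrix determinant lemma, is fully explicit. The simplification to $(N-s)^{n-k_1-1}N^{N-s-k_2-1}\bigl(k_2N+(N-s-k_2)\sum_{j\in R}s_j\bigr)$ goes through, and the $k_1=n$ case agrees. The paper, by contrast, only asserts Lemma~\ref{matrixlemma} by analogy with Proposition~\ref{p1}.

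The trade-off is this. The paper's route keeps a symmetric undirected Laplacian and reuses the setup of Proposition~\ref{p1}. Yours avoids the after-the-fact overcount correction and makes the bijection with $\mathbf{F}_\mathbf{C}$ transparent, at the cost of invoking the directed version of the theorem.
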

\begin{proof}
For a rooted forest $\mathcal{F}\in\mathbf{F}_\mathbf{C}$, let $\mathcal{G}_\mathbf{C}$ be the rooted $k$-forest obtained by collapsing every $\mathcal{C}_i\in\mathbf{C}$ to a single node. (In some cases a $\mathcal{C}_i$ will be a single node and nothing will happen to it, but for convenience we say this has still been ``collapsed''.) Renumber the nodes of $\mathcal{G}_\mathbf{C}$ such that the $n$ nodes that are the results of collapses come first. Let $d_i$ be the degree of the node acquired by collapsing $\mathcal{C}_i$. $\mathcal{G}_\mathbf{C}$ has $N - (s-n)$ nodes, $n$ of which must be independent. When $\mathcal{G}_\mathbf{C}$ is re-expanded to a member $\mathcal{F}\in\mathbf{F}_\mathbf{C}$, each of the $d_i$ incident edges to node $i$ has $s_i$ possible attachment points.

From the complete $(N-(s-n))$-graph, remove all edges connecting the first $n$ nodes, and then for all $i\leq n$ replace the edges joining all the nodes in $\{n+1, \ldots, N-(s-n)\}$ to node $i$ with $s_i$ separate edges. Let this graph be $D^{n,S}$. Every spanning $k$-forest of $D^{n,S}$ corresponds to an element of $\mathbf{F}_\mathbf{C}$, as if an edge connecting any node $j$ with a node $i\leq n$ exists, there are $s_i$ choices for it. We can, as in proposition~\ref{p1}, use the All-Minors Matrix-Tree Theorem on $D^{n,S}$. 

\begin{lemma}\label{matrixlemma}
Let $a,b\in\mathbb{Z}^+$ and $l,m\in\mathbb{Z}^{\geq0}$ with $l<a$ and $m<b$. Let $\{s_i\}$ be a sequence of $a$ positive integers. Let $L$ be a matrix of the form:

\begin{align*}
{\scriptsize
L= \begin{bNiceArray}{cccc|cccc}[margin]
s_1b & 0 & \cdots & 0 & -s_1 & -s_1 &\cdots & -s_1 \\
0 & s_2b & & 0 & -s_2 & -s_2 &\cdots & -s_2\\
\vdots & & \ddots & \vdots & \vdots & \vdots & & \vdots\\
0 & 0 & \cdots & s_{a-l}b & -s_{a-l} & -s_{a-l} &\cdots & -s_{a-l} \\
\hline
-s_1 & -s_2 & \cdots & - s_{a-l} & \sum_{i=1}^{a}s_i + b-1  & -1 & \cdots & -1 \\ 
-s_1 & -s_2 & \cdots & - s_{a-l} & -1 & \sum_{i=1}^{a}s_i + b-1 &  \cdots & -1 \\
\vdots & \vdots  &  & \vdots & \vdots &\vdots & \ddots & \vdots \\
-s_1 & -s_2 & \cdots & -s_{a-l} & -1 & -1 & \cdots & \sum_{i=1}^{a}s_i + b-1 \\
\CodeAfter\OverBrace[shorten,yshift=3pt]{1-1}{8-4}{a-l}
\CodeAfter\OverBrace[shorten,yshift=3pt]{1-5}{8-8}{b-m}
\end{bNiceArray}}
\end{align*}

Then
\begin{align*}
\mathrm{det}(L) = \left(\prod_{i=1}^{a-l} s_i\right) b^{a-l-1} \Bigl(\textstyle\sum_{i=1}^{a}s_i+b\Bigr)^{b-m-1}\left(m\sum_{i=1}^{a}s_i+ (b-m)\sum_{i=a-l+1}^{a}s_i + mb \right)
\end{align*}
\end{lemma}
\begin{proof}
This follows from elementary row operations in a similar way to proposition~\ref{p1}.
\end{proof}

The Laplacian matrix $L$ of $D^{n,S}$ is of the form of lemma~\ref{matrixlemma} with $a=n$, $b=N-s$ and $l=m=0$. First let us count the subset of $\mathbf{F}_\mathbf{C}$ with specified roots. The number of independent nodes that must be roots is $k-(N-s)$, or 0 if this is negative. The maximum number of independent nodes that can be roots is the smaller of $k$ and $n$. So let $\textrm{max}(0,k-(N-s))\leq k_1\leq\textrm{min}(n,k)$, and $k_2=k-k_1$. The $k$-forest count is the minor determinant of $L$ with the rows and columns corresponding to all the roots removed. For the non-independent roots this simply means removing any $k_2$ rows and columns from the last $b$, as those are identical, but for the independent roots let $R$ be the subset of $[n]$ of size $k_1$ containing them. We have a matrix of the form of lemma~\ref{matrixlemma} with $a=n$, $b=N-s$, $l=k_1$, $m=k_2$, and thus, writing $\sum_{i\in[n]}s_i=s$, the determinant is
\begin{align*}
\Biggl(\prod_{i\in[n]\setminus R}s_i\Biggr)(N-s)^{n-k_1-1}N^{N-s-k_2-1}\left(k_2N+ (N-s-k_2)\sum_{i\in R}s_i  \right).
\end{align*}
This counts each of the $d_i$ edges incident to collapsed node $i$ as having $s_i$ attachment points. For $i\notin R$ one of those edges joins $\mathcal{C}_i$ to the root of its component of $\mathcal{F}$, and the requirement that this path run through the root of $\mathcal{C}_i$ fixes its attachment point. The determinant therefore overcounts $\mathbf{F}_\mathbf{C}$ by a factor of $s_i$ for each $i\notin R$, which cancels the leading product exactly, leaving
\begin{align*}
C_R = (N-s)^{n-k_1-1} N^{N-s-k_2-1}\left(k_2N+ (N-s-k_2)\sum_{i\in R}s_i  \right).
\end{align*}

If we relax the assumption that the non-independent roots are known then this is multiplied by ${N-s}\choose{k_2}$ to count the possibilities; if we further relax and let the independent roots be unknown we need to sum over all subsets $R$ of $[n]$ that are of size $k_1$, and finally if we let $k_1$ be unknown then we need to sum over all values of it from $\textrm{max}(0,k-(N-s))\leq k_1\leq\textrm{min}(n,k)$ giving a total equal to:
\begin{align*}
\sum_{k_1=\max\bigl(0,\,k-(N-s)\bigr)}^{\min(n,k)}\;\sum_{\substack{R\subseteq[n]\\|R|=k_1}}\binom{N-s}{k_2} C_R
\end{align*}

\end{proof}


\begin{corollary}\label{corrrr}
\begin{align*}
|\mathbf{F}_S| = \prod_{i=1}^n s_i^{s_i-1} |\mathbf{F}_\mathbf{C}| 
\end{align*}
\end{corollary}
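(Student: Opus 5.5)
The plan is to write $\mathbf{F}_S$ as a disjoint union of the sets $\mathbf{F}_\mathbf{C}$ over all possible choices of the internal structure of the clusters, and then show that each of these sets has the same size. For each $i$, let $\mathcal{C}_i$ range over the labelled rooted trees on the $s_i$ labels assigned to cluster $i$. By Cayley's formula (the $k=1$ case of $|\fzero|$ with the root unspecified) there are $s_i\cdot s_i^{s_i-2}=s_i^{s_i-1}$ such trees. Hence there are $\prod_{i=1}^n s_i^{s_i-1}$ possible sequences $\mathbf{C}$.

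First I will show that the sets $\mathbf{F}_\mathbf{C}$ partition $\mathbf{F}_S$. Take any $\mathcal{F}\in\mathbf{F}_S$. The nodes of each cluster are joined to one another in $\mathcal{F}$ by direct transmission, so the subgraph of $\mathcal{F}$ induced on the labels of cluster $i$ is a subtree. I read the definition of $\mathbf{F}_S$, via the data interpretation, as requiring this induced subgraph to be connected, not merely free of isolated nodes; this is the point where I need to be careful. The rooting of $\mathcal{F}$ then induces a rooting of that subtree: its root is the unique node closest to the root of its component, and every path towards the component root passes through it. This yields a unique $\mathbf{C}$ with $\mathcal{F}\in\mathbf{F}_\mathbf{C}$, because the rooting condition in $\mathbf{F}_\mathbf{C}$ is automatically satisfied by the induced rooting. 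The independence condition between different clusters is the same in both definitions. Conversely, every element of any $\mathbf{F}_\mathbf{C}$ lies in $\mathbf{F}_S$. Distinct sequences $\mathbf{C}$ give disjoint sets, since the induced subtree together with its induced root is determined by $\mathcal{F}$.

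Next I will show that $|\mathbf{F}_\mathbf{C}|$ depends only on $S$ and not on the shapes of the $\mathcal{C}_i$. The formula in Proposition~\ref{propFC} involves only $N,k,s,n$ and the $s_i$. Its derivation collapses each $\mathcal{C}_i$ to a single node, and the attachment-point counting uses only $s_i$ and the position of the root of $\mathcal{C}_i$. So the count is the same for every $\mathbf{C}$. Summing over the $\prod_i s_i^{s_i-1}$ disjoint pieces then gives the corollary.

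The main obstacle is the decomposition step. I need to check that the definition of $\mathbf{F}_S$ really forces each cluster to induce a connected subtree, and that the induced rooting always exists and is unique. The definition as literally stated only requires each node to be adjacent to some other node in its cluster, which would allow a cluster to split into several adjacent pairs. I would resolve this by appealing to the stated interpretation that the members of a cluster form a connected subtree $\mathcal{C}_i$, and I would make that connectedness explicit before carrying out the bijection argument above.
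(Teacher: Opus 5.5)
Your proposal is correct and is essentially the paper's argument. The paper's one-line proof says each cluster admits $s_i^{s_i-1}$ rooted arrangements by Cayley's formula, with $|\mathbf{F}_\mathbf{C}|$ from Proposition~\ref{propFC} depending only on $S$; it tacitly relies on the partition of $\mathbf{F}_S$ into disjoint sets $\mathbf{F}_\mathbf{C}$ and on the shape-independence, which you make explicit. Your caveat is also well taken: the literal definition of $\mathbf{F}_S$ only requires each node to have a neighbour in its cluster, and both your proof and the paper's need the intended reading that each cluster induces a connected subtree.
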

\begin{proof}
If only the numbers of nodes $s_i$ in the $i$th connected component of the data are known, then the number of rooted trees into which that component's nodes can be arranged is $s_i^{s_i-1}$ by Cayley's formula.
\end{proof}

Proposition~\ref{propFC} and Corollary~\ref{corrrr} give the total number of forests consistent with the data, in the former case where the $\mathcal{C}_i$ are known, and in the latter where only the $s_i$ are. To turn these into the likelihoods~\eqref{tractable} and \eqref{tractable2}, we simply need to divide by the number of possible $k$-forests on $N$ nodes, i.e. ${N\choose k}kN^{N-k-1}$.

\section{Discussion}

We propose here an approach to demographic inference using pathogen genomes that is markedly different to those that have come before. It has two established cousins in epidemiology, however. The first is mark-recapture, where a population is surveyed in multiple rounds and the number of individuals identified more than once is used to estimate the size of the population \cite{Hook1995-cr,  IWGDMF1995-cr, Chao2001-cr}. We sample only once and, instead, identify transmission links.  More recaptures correspond to a small population in mark-recapture, and more links a small population here. The key methodological difference we confront is that the transmission tree structure needs to be dealt with.

The second methodology is the snowball sampling for the estimation of network size \cite{Frank1994-hp,  Handcock2014-hp, Crawford2018-hp}. That too uses the links identified by epidemiological sampling to estimate the size of a population; the differences are, firstly, that the object of interest is a network not a tree, and, secondly, that individuals are sampled specifically because of their proximity to other individuals in the network. Here sampling is random.

There are two main types of source attribution methodology, which we will term ``phylodynamic'' and ``observational''. The former, which is more common \cite{Hall2015-bl, Colijn2024-br, Campbell2018-o2, Ypma2013-tt,Didelot2017-tp, Klinkenberg2017-pb}, treats a sample of genomes and their accompanying metadata as the result of the combination of an epidemiological transmission model, a mutation model, and a sampling model; the rules under which cases emerge and enter the dataset are parameterised. The latter instead collects genomes and establishes relationships between them based largely on those genomes alone. It is in some ways rather closer to a survey than a model-fitting exercise. The most prominent example of a methodology that works like that is phyloscanner \cite{Wymant2018-ps} and its derivatives \cite{Dhar2022-tn, Sledzieski2021-tf}. However, the entire realm of genomic clustering \cite{Croucher2015-bc,KosakovskyPond2018-ht,Stimson2019-st} is a relaxed version of the same idea. Clustering does not establish direction of transmission or even imply that the links it draws are direct transmissions at all, but it nevertheless finds close neighbours in the transmission tree based solely on genomes. It is dynamics-agnostic.

If the observational approach is to be likened to a survey, in which a sample of infected individuals are identified and then linked by transmission using their pathogen genomes, then the mathematical theory of the process is underdeveloped when compared to the phylodynamic approach. Previous statistical approaches have fit a mechanistic model of transmission between populations on top of the set of pairs \cite{Xi2022-pf, Bu2024-pp}. Here we take a different approach, leaning in to the similarities with a survey in which we are sampling random nodes from the transmission forest making up the sampling frame with the ability to identify which are transmission pairs, or form clusters. This will have strengths and weaknesses when compared to a more mechanistic approach, but fundamentally it is different, and adds a new tool to the box.

We propose three immediate use cases. Firstly, there is the question of how deeply one needs to sample the population of infected individuals in order to acquire a dataset with a desired number of transmission pairs. One of the key objectives in doing source attribution work is to characterise both the population of sources and the collection of identified pairs \cite{Ratmann2019-ds}; the collection of pairs might then be the subject of a separate statistical analysis in which each is a data point. Knowing the likely size of that dataset of pairs is essential to planning a prospective study.

The second use case is to check if the assumptions made regarding the sampling process and the epidemiology of the infection are valid. In HIV clustering it is quite frequent to treat identified clusters as hotspots of transmission \cite{Poon2016-hs,Wertheim2018-cg, Little2014-nw}, but it has been demonstrated that this may in some cases be artefactual, caused instead by clustered sampling \cite{Poon2016-gc}. There is a framework here for a sanity check, although the current version has caveats that we outline below.

Finally, there is the possibility of genuine demographic inference, whereby an epidemic size, or number of introductions to the sampling frame, is to be estimated from the arrangment of pairs and singletons in the dataset. This may be best in a Bayesian framework where priors can be placed on either quantity.

The methods outlined here are more a proof of concept than a fully usable suite of methods, and the assumptions made are quite stringent and indeed unrealistic. If the dataset consists of transmission pairs then it is assumed that those can be identified perfectly. If it consists of clusters then the assumption is that, while the transmission structure within a cluster is unknown, there are no missing intermediate individuals involved; everyone in the cluster except the root individual was infected by someone else in the cluster. This level of certainty in transmission reconstruction is never going to be realistic, although the exact sensitivity and specificity in declaring a pair of sequences to come from a transmission pair will vary greatly between organisms. A more robust iteration of this method would allow for error in making the inference. The method in its current iteration also disposes of all information from the genomes after pairs are called; proximity in the phylogeny is not used any further.

It is also assumed that all transmission histories are equally likely amongst the sampled cases. This flatly ignores an absolutely commonplace piece of information---sampling dates---but also ignores the particular characteristics that a pathogen brings to transmission tree shape, such as degree distribution. Both should be relaxed in refinements.

As large prospective genomic studies become more achievable and routine, the time is right for new approaches to analysing data of this kind. Proposed here are the foundations of a novel methodological approach, abandoning the ``dynamics-first'' principles of phylodynamics and instead treating genomic sampling as a process by which sections of an unknown transmission tree can be uncovered. It is a potentially useful addition to the toolbox, and further work is needed to explore its strengths and limitations.

\printbibliography

\end{document}